\DeclarePairedDelimiter{\floor}{\lfloor}{\rfloor}
\newtheorem{defn}{Definition}
\newtheorem{thm}{{\cal T}heorem}[section]
\newtheorem{cor}[thm]{Corollary}
\newtheorem{prop}{Proposition}
\newtheorem{lem}[thm]{Lemma}
\newtheorem{conj}[thm]{Conjecture}
\newtheorem{constr}[thm]{Construction}
\newtheorem{note}{Remark}
\newcommand{\bit}{\begin{itemize}}
\newcommand{\eit}{\end{itemize}}
\newcommand{\bcor}{\begin{cor}}
\newcommand{\ecor}{\end{cor}}
\newcommand{\beq}{\begin{equation}}
\newcommand{\eeq}{\end{equation}}
\newcommand{\beqn}{\begin{equation*}}
\newcommand{\eeqn}{\end{equation*}}
\newcommand{\bea}{\begin{eqnarray}}
\newcommand{\eea}{\end{eqnarray}}
\newcommand{\bean}{\begin{eqnarray*}}
\newcommand{\eean}{\end{eqnarray*}}
\newcommand{\ben}{\begin{enumerate}}
\newcommand{\een}{\end{enumerate}}
\newcommand{\bdefn}{\begin{defn}}
\newcommand{\edefn}{\end{defn}}
\newcommand{\bnote}{\begin{note}}
\newcommand{\enote}{\end{note}}
\newcommand{\bprop}{\begin{prop}}
\newcommand{\eprop}{\end{prop}}
\newcommand{\blem}{\begin{lem}}
\newcommand{\elem}{\end{lem}}
\newcommand{\bthm}{\begin{thm}}
\newcommand{\ethm}{\end{thm}}
\newcommand{\bconj}{\begin{conj}}
\newcommand{\econj}{\end{conj}}
\newcommand{\bconstr}{\begin{constr}}
\newcommand{\econstr}{\end{constr}}
\newcommand{\bpf}{\begin{proof}}
\newcommand{\epf}{\end{proof}}
\begin{document}
\title{Rack-Aware Cooperative Regenerating Codes}
\author{
\IEEEauthorblockN{Shreya Gupta, V. Lalitha\\}
\IEEEauthorblockA{SPCRC, International Institute of Information Technology Hyderabad\\
Email: \{shreya.gupta@research.iiit.ac.in,  lalitha.v@iiit.ac.in\}\\}
\vspace{-0.3cm}
}

\maketitle

 \begin{abstract}
 In distributed storage systems, cooperative regenerating codes tradeoff storage for repair bandwidth in the case of multiple node failures. In rack-aware distributed storage systems, there is no cost associated with transferring symbols within a rack. Hence, the repair bandwidth will only take into account cross-rack transfer. Rack-aware regenerating codes for the case of single node failures have been studied and their repair bandwidth tradeoff characterized. In this paper, we consider the framework of rack-aware cooperative regenerating codes for the case of multiple node failures where the node failures are uniformly distributed among a certain number of racks. We characterize the storage repair-bandwidth tradeoff as well as derive the minimum storage and minimum repair bandwidth points of the tradeoff. We also provide constructions of minimum bandwidth rack-aware cooperative regenerating codes for all parameters.
  \end{abstract}

\section{Introduction}

Traditional erasure coding techniques, though storage-efficient,  incur a large cost in terms of repair bandwidth to handle node failures where repair bandwidth is the total amount of data download needed
to repair a failed node. Regenerating codes \cite{dimakis2010network} are a class of codes designed to efficiently tradeoff storage efficiency for repair bandwidth. In the following, we will describe two variants of the setting of regenerating codes which are important in the context of the current paper. The first one allows to handle multiple node failures and are known as cooperative regenerating codes. The second one allows rack-based topology of nodes and are known as rack-aware regenerating codes.

\vspace{-0.07cm}
\subsection{Cooperative Regenerating Codes}
Cooperative Regenerating Codes are defined using parameters $(n,k,d,e,\alpha,\beta_1,\beta_2)$. The total number of nodes is denoted by $n$ and $k$ denotes the number of nodes required for the data collection process. Any $k$ out of $n$ nodes are sufficient to reconstruct the whole file. Each node consists of $\alpha$ symbols. The original data file $B$ is stored on $n$ nodes and is encoded into $n\alpha$ symbols. The number of simultaneous failures is denoted by $e$. In case of failures, cooperative regenerating codes do the repair in two rounds. In the first round, all the new (or replacement) nodes download $\beta_1$ symbols from any $d$ non-failed nodes, where $k \leq d \leq n-e$ and in the second round all the replacement nodes share their information with the all the other $e-1$ replacement nodes. So every replacement node receives $\beta_2$ symbols from $e-1$ other replacement nodes. This gives the repair bandwidth per replacement node to be $\gamma = d\beta_1 + (e-1)\beta_2$.
\par

For fixed values of $(n,k,d,e)$, there exist a bound on the file size $B$ and for the fixed value of $B$, it gives a storage$(\alpha)$-bandwidth$(\gamma)$ tradeoff. The corner points of the tradeoff where $\alpha$ is minimum, known as Minimum Storage Cooperative Regeneration(MSCR) point, and where $\gamma$ is minimum, known as Minimum Bandwidth Cooperative Regeneration(MBCR) point, are given by:
\begin{equation} \label{eq:mscr}
    (\alpha_{MSCR}, \gamma_{MSCR}) =  (\frac{B}{k}, \frac{B(d+e-1)}{d+e-k})
\end{equation}
\begin{equation} \label{eq:mbcr}
    \alpha_{MBCR} = \gamma_{MBCR} = \frac{B(2d+e-1)}{k(2d+e-k)}
\end{equation}

Cooperative regenerating codes reduce to regenerating codes by setting $e=1$.
MSCR and MBCR points become minimum storage regeneration (MSR) and minimum bandwidth regeneration (MBR) points, respectively, when $e=1$. Cooperative regenerating codes have been studied in \cite{KermarrecScouarnec,Shum13}. Explicit MBCR codes have been constructed in \cite{wang2013exact,shum2016cooperative}. Constructions of MSCR codes have been presented in \cite{shum2016cooperative,YeBargMSCR}.

There is another setting of regenerating codes for the case of multiple node failures where a central node downloads all the information from the helper nodes and reconstructs all the failed nodes. This setting is known as centralized repair and has been studied in \cite{zorgui2019centralized}.

\subsection{Rack-aware Regenerating Codes (RRCs)}\label{sec:rrgc}

Rack-Aware Regenerating code (RRC) is denoted by parameters $n, k, d, r, \alpha, \beta$. In RRC, a data file of size $B$ is encoded into $n\alpha$ symbols and stored on $n$ nodes in $r$ racks where $r$ divides $n$ such that each rack consists of $\frac{n}{r}$ nodes. The data collector reconstructs the whole file and requires downloading $\alpha$ symbols from any $k$ out of $n$ nodes. In case of a failure of a node in rack $h$, the new node downloads $\beta$ symbols from any $\floor{\frac{kr}{n}} \leq d < r$ helper racks(or relayers) other than rack $h$ and $\alpha$ symbols from all the other nodes within the rack $h$ to reconstruct the failed node. Each rack contains a relayer node which can access the contents of other nodes within the rack. The cross-rack repair bandwidth in RRC to reconstruct the failed node is given by: $\gamma = d\beta$. \par
For fixed parameters $(n, k, d, r)$ and $B$, there exist a storage$(\alpha)$ - bandwidth$(d\beta)$ tradeoff. The point on the tradeoff curve where storage per node $\alpha$ is minimum is denoted by Minimum Storage Rack-aware Regeneration (MSRR) point and the point where cross rack repair bandwidth is minimum is denoted by Minimum Bandwidth Rack-aware Regeneration (MBRR) point. The parameters of the MSRR and MBRR point are given below:
\begin{equation} \label{eq:msrr}
    (\alpha_{MSRR}, \gamma_{MSRR}) = (\frac{B}{k}, \frac{Bd}{k(d-m+1)}),
\end{equation}
\begin{equation} \label{eq:mbrr}
    \alpha_{MBRR} = \gamma_{MBRR} = \frac{Bd}{(k-m)d+m(d-\frac{m-1}{2})}.
\end{equation}

The storage repair bandwidth tradeoff and the constructions of rack-aware regenerating codes have been investigated in \cite{hou2019rack}.
A related version of storage-repair bandwidth tradeoff has been studied in the context of clustered distributed storage systems in \cite{prakash2018storage}. The tradeoff for the case of multiple node failures where all the node failures are in a single cluster have been dealt with in \cite{abdrashitov2017storage}.

\subsection{Our Contributions}
In this paper, we consider the framework of rack-aware cooperative regenerating codes for the case of multiple node failures where the node failures are uniformly distributed among a certain number of racks. We characterize the storage repair-bandwidth tradeoff as well as derive the minimum storage and minimum repair bandwidth points of the tradeoff. We also provide constructions of minimum bandwidth rack-aware cooperative regenerating codes for all parameters. The system model and the tradeoff are introduced in Section \ref{sec:sys_model}. The construction of minimum bandwidth rack-aware cooperative regenerating codes is presented in Section \ref{sec:mbrcr}.

\section{System Model}\label{sec:sys_model}
The problem of multi-node repair in case of rack-aware regenerating codes is characterized by the parameters ($B$, $n$, $k$, $d$, $r$, $e$, $f$, $\alpha$, $\beta_1$, $\beta_2$). We consider a distributed storage system consisting of $n$ nodes which are equally divided into $r$ racks and store $B$ amount of information. In this paper we assume that $r$ divides $n$. So, each rack contains $\frac{n}{r}$ nodes. We define $X\textsubscript{h,i}$ as the $i$-th node in rack $h$ where $i$ $=$ $1$,$2$,$\ldots$,$\frac{n}{r}$ and $h$ $=$ $1, 2,\ldots,r$. Each node consists of $\alpha$ symbols. A $data$ $file$ of size $B$ is encoded into $n\alpha$ symbols on $n$ nodes. \par
Each rack has a distinguished node called the $relayer$ $node$ which can access the contents of other nodes within the same rack. The system should satisfy the following two properties$\colon$

\begin{itemize}
    \item Reconstruction property$\colon$ any $k\leq n$ nodes out of $n$ nodes should be able to reconstruct the whole file.
    \item Regeneration Property$\colon$ We consider failure of $e$ nodes in any $f$ racks such that $f$ divides $e$ and each of the $f$ racks has $\frac{e}{f}$ failed nodes, we consider this case for simplicity so that cross rack bandwidth is uniform. There are $d$ helper racks (or relayers) where $m \leq d\leq r-f$ and $m=\floor*{\frac{kr}{n}}$. We use cooperative repair across racks and centralized repair within the racks. There are two rounds of repair. In the first round, each of the $f$ racks, which have failed nodes, downloads $\beta_1$ symbols from each of the $d$ helper racks. In the second round, all $f$ racks share information with each other. So every rack which has failed nodes downloads $\beta_2$ symbols from each of the other racks which have failed node. Thus, the cross rack repair bandwidth for one rack in case of $e$ node failures  $\colon$
    $\gamma=$ $d \beta\textsubscript{1}$ $+$ $(f-1) \beta\textsubscript{2}$. 
    
\end{itemize}


The aim of this paper is to characterize the tradeoff between the storage per node $\alpha$ and repair bandwidth $\gamma$. In this paper, we derive the tradeoff for functional repair. We call an encoding scheme which satisfies the above requirements with parameters $n, k, d, r, e, f, \alpha, \beta_1, \beta_2$ as a $rack-aware$ $cooperative$ $storage$ $system$ (RCSS).

\subsection{Information Flow Graphs}
The storage system described in the previous section can be represented by $information$ $flow$ $graphs$ (IFGs). Our IFG contains a vertex $S$ which represents the original data file and a vertex $T$ which represents the data collector. 


We have vertices $\mathsf{Out\textsubscript{h,i}}$ for $i$-th node in rack $h$ where $h=1,2,...,r$ and $i=1,2,...,n/r$. Edges from $S$ to $\mathsf{Out\textsubscript{h,i}}$ have capacity $\alpha$. First node ($X\textsubscript{h,1}$) in rack $h$ for $h=1,2,...,r$ is considered as the relayer node and can access the data from other nodes within the same rack. For each $h=1,2,...,r$ and $i=2,3,...,\frac{n}{r}$, there are edges of infinite capacity from each $\mathsf{Out\textsubscript{h,i}}$ to $\mathsf{Out\textsubscript{h,1}}$. 

\begin{figure}[h!]
\centering
\includegraphics[width=3.55in]{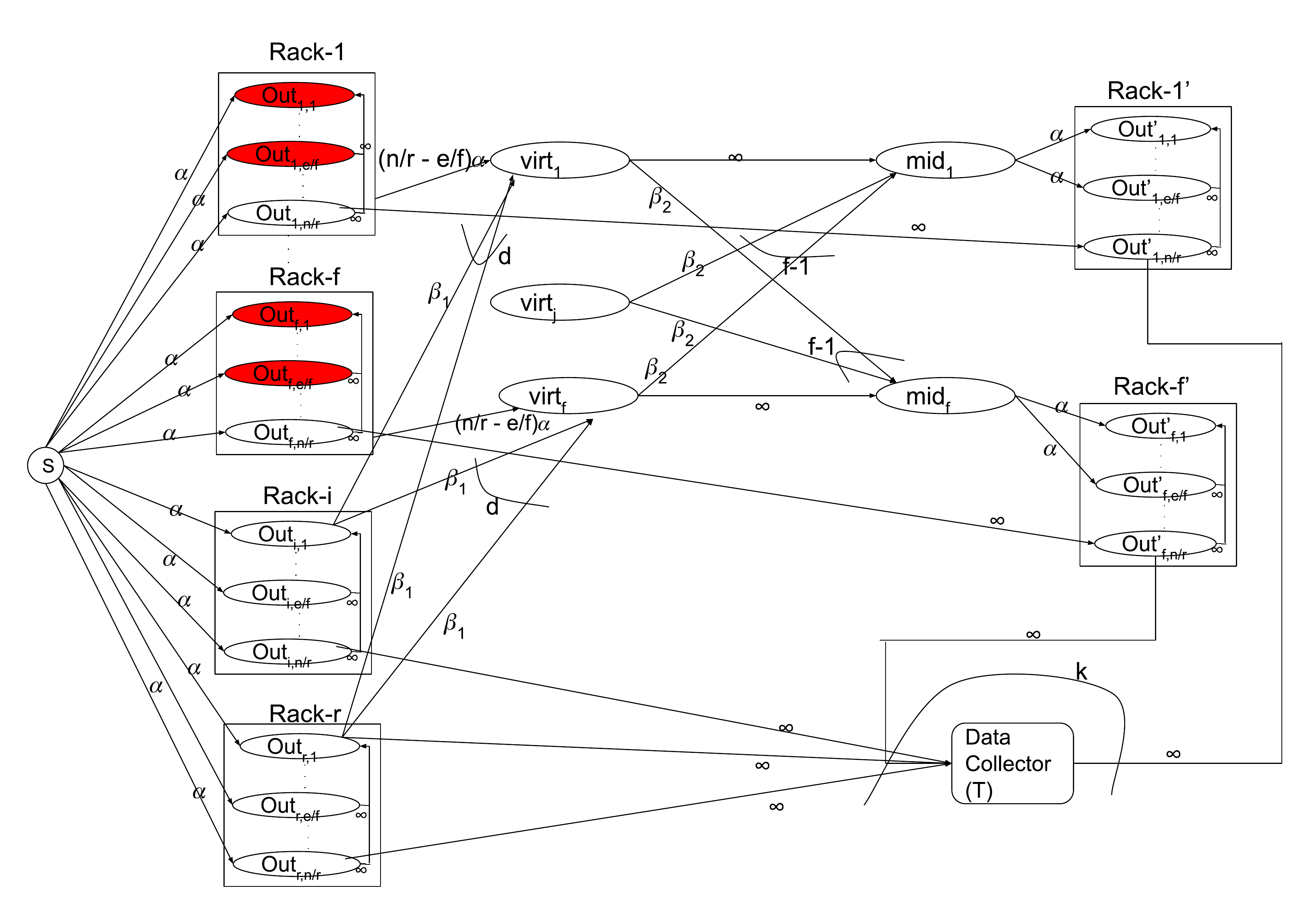}
\caption{Information Flow Graph of Rack-aware Cooperative Storage System. Red nodes denote the failed nodes.}
\label{fig:rcss}
\end{figure}

We consider $e$ nodes fail in $f$ racks with $\frac{e}{f}$ failures in each of the $f$ racks. We define a stage as the point of recovery of $e$ nodes in $f$ racks, failed in a previous stage and then simultaneous failure of $e$ nodes in $f$ racks. At $s=0$, first set of $e$ nodes fail. For $s=1,2,3,...,$ let $F\textsubscript{s}$ be the set of $f$ racks which have nodes that failed in stage $s-1$ and are regenerating in stage $s$. The set $F\textsubscript{s}$ $\subseteq$ $\{ 1,2,...,r \} $ and $|F\textsubscript{s}|=f$. $F\textsubscript{s}$ is also known as $repair$ $group$ which is a set of racks with failed nodes that get reconstructed simultaneously in stage $s$. For each rack $h$ in $F\textsubscript{s}$, we construct vertex $virt\textsubscript{h}$, vertex $mid\textsubscript{h}$ and vertices corresponding to all the nodes within that rack denoted by $\mathsf{Out^{\text{'}}_{\text{h,i}}}$ for $i=1,2,...,n/r$. For the nodes in rack $h$ which have not failed in stage $s-1$, there is an edge of capacity infinity from $\mathsf{Out\textsubscript{h,i}}$ to $\mathsf{Out^{\text{'}}_{\text{h,i}}}$. To reconstruct the failed nodes in rack $h$, vertex $virt_h$ has $d$ incoming edges with capacity $\beta_1$ from $d$ helper racks and $(\frac{n}{r} - \frac{e}{f})$ incoming edges with capacity $\alpha$ from the nodes which have not failed in stage $s-1$ within rack $h$. Vertex $virt_h$ is connected to vertex $mid_h$ with an edge of capacity infinite. For $h, h^{'} \in F_s$, $h \neq h^{'}$, we join $virt_h$ to $mid\textsubscript{h\textsuperscript{'}}$ with capacity $\beta_2$. $virt_h$ and $mid_h$ in information flow graph together work as $cent_h$ which is the node where centralized reconstruction happen for the failed nodes within rack $h$. There are directed edges of capacity $\alpha$ from $mid_h$ to $\mathsf{Out^{\text{'}}_{\text{h,i}}}$ for $i$ belonging to the failed nodes in rack $h$ in stage $s-1$. The download at $virt_h$ where the replacement rack downloads $\beta_1$ symbols from $d$ other racks and $\alpha$ symbols from the remaining nodes within the same rack corresponds to the first round of repair and download at $mid_h$ where each replacement rack downloads $\beta_2$ symbols from every other rack corresponds to the second round of repair. We also have an edge with infinite capacity from each $\mathsf{Out^{\text{'}}_{\text{h,j}}}$ to $\mathsf{Out^{\text{'}}_{\text{h,1}}}$ for $j=2,...,\frac{n}{r}$. Data collector $T$ is connected to any $k$ out of $n$ existing nodes with capacity of infinite.

Given parameters $n,k,d,r,e,f,\alpha,\beta_1,\beta_2$, there can be many IFGs based on the failure pattern. The set of all such IFGs is denoted by  $\mathbb{G}(n,k,d,r,e,f,\alpha, \beta_1, \beta_2)$. Given an IFG $G$ $\in$ $\mathbb{G}$, there can be many different data collectors connecting to any $k$ data nodes out of $n$. We denote the set of all data collectors corresponding to IFG $G$ by $DC(G)$. For an IFG $G$ $\in$ $\mathbb{G}$ with source vertex $S$ and a data collector $T$ $\in$ $DC(G)$, the $(S,T)$-cut is defined as the sum of capacities of a subset of edges of graph $G$ which when removed from the graph, partition the vertices of $G$ such that $S$ and $T$ are disconnected. The smallest capacity of an $(S,T)$-cut in a given IFG $G$ is denoted by $mincut\textsubscript{G}(S,T)$. According to the max-flow bound in network thoery, the supported file size is upper bounded by $mincut\textsubscript{G}(S,T)$ minimized over all data collectors $T \in DC(G)$ and IFG $G$ $\in$ $\mathbb{G}$.

The following lemma is very similar to the one in \cite{hou2019rack} and it has been discussed here for the sake of completeness.
\begin{lem} \label{lem:relayer}
If a relayer node in a rack is connected to the data collector $T$ and not all the other remaining $\frac{n}{r}-1$ nodes are connected to $T$, then the capacity of $(S,T)$-cut is not minimum.
\end{lem}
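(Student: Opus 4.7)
The plan is to show that whenever a data collector $T$ satisfies the stated hypothesis, one can construct another data collector $T'\in DC(G)$ with $\mathrm{mincut}_G(S,T')<\mathrm{mincut}_G(S,T)$, so $T$ cannot attain the minimum of $\mathrm{mincut}_G(S,\cdot)$ over $DC(G)$.

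Let $\mathcal{K}$ be the set of $k$ nodes to which $T$ is joined by $\infty$-capacity edges, and suppose $\mathsf{Out}_{h,1}\in\mathcal{K}$ while some $\mathsf{Out}_{h,i^*}$ of rack $h$ is not in $\mathcal{K}$. First I would observe that any finite-capacity $(S,T)$-cut must place every vertex of $\mathcal{K}$ on the $T$-side, since each is joined to $T$ by an $\infty$-edge. Applying this to the relayer $\mathsf{Out}_{h,1}$ and then using the $\infty$-capacity edges $\mathsf{Out}_{h,i}\to\mathsf{Out}_{h,1}$ for $2\le i\le n/r$, every vertex of rack $h$ is forced onto the $T$-side as well. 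Consequently the cut must include the $\alpha$-edge $S\to\mathsf{Out}_{h,i}$ for every $i$, and rack $h$ contributes at least $\alpha\cdot(n/r)$ to $\mathrm{mincut}_G(S,T)$.

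Next, define $\mathcal{K}':=(\mathcal{K}\setminus\{\mathsf{Out}_{h,1}\})\cup\{\mathsf{Out}_{h,i^*}\}$, still of size $k$, and let $T'$ be the associated data collector. Writing $j=|\mathcal{K}\cap\{\mathsf{Out}_{h,1},\ldots,\mathsf{Out}_{h,n/r}\}|$, the hypothesis forces $1\le j<n/r$, and $|\mathcal{K}'\cap\text{rack }h|=j$ with $\mathsf{Out}_{h,1}\notin\mathcal{K}'$. I would then exhibit an $(S,T')$-cut that agrees with the min-cut of $T$ on all edges outside rack $h$, places $\mathsf{Out}_{h,1}$ and all non-$\mathcal{K}'$ rack-$h$ nodes on the $S$-side, and retains only the $j$ nodes of $\mathcal{K}'\cap\text{rack }h$ on the $T'$-side. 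The $\infty$ in-edges of the relayer now go $S$-side to $S$-side and incur no cost, so the rack-$h$ contribution drops from $\alpha\cdot(n/r)$ to $\alpha\cdot j$, yielding $\mathrm{mincut}_G(S,T')\le \mathrm{mincut}_G(S,T)-\alpha(n/r-j)<\mathrm{mincut}_G(S,T)$, which is the required contradiction to minimality.

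The argument above is cleanest at the initial stage, and the main obstacle I anticipate is pushing it through the multi-stage IFG, whose vertex set contains the replacement-node copies $\mathsf{Out}'_{h,i}$ together with $virt_h$, $mid_h$ and the repair-edges of capacities $\alpha$, $\beta_1$, $\beta_2$. One must verify that the $\infty$-edges $\mathsf{Out}'_{h,i}\to\mathsf{Out}'_{h,1}$ inside every reconstructed rack again force the entire replacement rack onto the $T$-side whenever its stage-$s$ relayer lies in $\mathcal{K}$, and that re-routing the $T$-min-cut after the swap $\mathsf{Out}_{h,1}\leftrightarrow\mathsf{Out}_{h,i^*}$ cannot create new capacity through the repair edges that would cancel the $\alpha(n/r-j)$ saving. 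Once this stage-by-stage bookkeeping is completed, the swap argument establishes the claim for every $G\in\mathbb{G}(n,k,d,r,e,f,\alpha,\beta_1,\beta_2)$ and every eligible $T\in DC(G)$.
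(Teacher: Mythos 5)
Your swap goes in the opposite direction from the paper's argument, and the strict inequality it rests on is not established. The paper keeps the relayer in the data collector's set and observes that, since the $\infty$-edges into the relayer force the whole rack onto the $T$-side, the rack's cost is already paid --- namely $\min\bigl(\frac{n}{r}\alpha,\; d\beta_1+(\frac{n}{r}-\frac{e}{f})\alpha+(f-1)\beta_2\bigr)$ --- so the remaining in-rack nodes have zero marginal contribution and a cut-minimizing $T$ should include them (trading them for out-of-rack nodes, which never moves any vertex from the $T$-side to the $S$-side). You instead remove the relayer and claim $\mathrm{mincut}_G(S,T')\le \mathrm{mincut}_G(S,T)-\alpha(\frac{n}{r}-j)$. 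Two concrete problems arise. First, your lower bound ``rack $h$ contributes at least $\alpha\cdot\frac{n}{r}$'' is only valid when rack $h$ has never been repaired; if it contains reconstructed nodes, the relayer only forces a contribution of $\min\bigl(\frac{n}{r}\alpha,\; d\beta_1+(\frac{n}{r}-\frac{e}{f})\alpha+(f-1)\beta_2\bigr)$, which can be smaller than $\frac{n}{r}\alpha$ and even smaller than $j\alpha$ (e.g.\ when $j>\frac{n}{r}-\frac{e}{f}$), in which case the cut for $T'$ can do no better than re-using the same partition and your claimed saving disappears.

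Second, the step you yourself flag as ``bookkeeping'' is the crux, not a technicality: to realize any saving you must move the relayer (and the non-selected rack-$h$ vertices) to the $S$-side, but the relayer is exactly the vertex with outgoing $\beta_1$ helper edges into the $virt_{h''}$ vertices of racks repaired in later stages. If any such $virt_{h''}$ sits on the $T$-side of the cut you are modifying, each of these edges becomes a new crossing edge of capacity $\beta_1$, and nothing in your argument bounds their number or shows the net change is negative; this is precisely the mechanism that produces the $(d-\sum_{j<i}u_j)\beta_1$ terms in the main theorem, so it cannot be waved away. Without closing both gaps the strict inequality --- and hence your contradiction --- is unproven, whereas the paper's direction of exchange (add the rest of the rack, drop out-of-rack nodes) never moves a vertex across the cut and only needs the ``zero marginal contribution'' observation, which is why its proof is short. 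I would recommend either completing the stage-by-stage analysis you sketch (including the repaired-rack cost $\min\bigl(\frac{n}{r}\alpha,\; d\beta_1+(\frac{n}{r}-\frac{e}{f})\alpha+(f-1)\beta_2\bigr)$ and the helper-edge accounting), or restructuring the proof along the paper's exchange.
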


\begin{proof}
Suppose a relayer $X_{h,1}$ for any $h = 1,2,...,r$ is connected to the data collector $T$. All the incoming edges of $T$ have infinite capacity. Assuming that rack $h$ does not have any failed nodes, we consider only the incoming edges of $\mathsf{Out_{h,i}}$ for $i=1,2,...,n/r$, each with capacity $\alpha$, which in total contribute $\alpha \frac{n}{r}$ to the cut. So, relayer node $X_{h,1}$ contribute $\alpha \frac{n}{r}$ to the cut in case of no node failure within the rack.  On the other hand, if rack $h$ has failed and then reconstructed nodes and the relayer node $X_{h,1}^{'}$ is connected to the $T$, then we can have cut of capacity $\alpha \frac{n}{r}$ considering the incoming edges of $\mathsf{Out_{h,i}^{'}}$  and $\mathsf{Out_{h,i}}$ for the cut or of capacity $d\beta_1 + (\frac{n}{r} - \frac{e}{f})\alpha$ considering the incoming edges of $virt_h$ and $(f-1)\beta_2$ considering the incoming edges of $mid_h$  for the cut. So a failed relayer node contributes $min(\alpha \frac{n}{r}, (d\beta_1 + (\frac{n}{r} - \frac{e}{f})\alpha + (f-1)\beta_2))$ to the cut. In both cases of failed relayer or non failed relayer, if a relayer is connected to $T$ then the other nodes within the rack if connected or not connected to $T$, will not have any contribution to the cut. So  to minimize the capacity of the cut, if a relayer is connected to $T$ then all the other nodes within the rack should also be connected to $T$.
 \end{proof}

\begin{thm}
For fixed parameters $n, k, d, r, e, f, \alpha, \beta_1, \beta_2$, where $m \leq d \leq (r-f)$ and $f | m$, if  there exist an RCSS with file size $B$, then it will satisfy $\colon$

\begin{equation} \label{eq:tradeoff}
    B \leq k\alpha + \sum_{i=1}^{g} u_i \min(0, (d-\sum_{j=1}^{i-1} u_j)\beta_1 - \frac{e}{f}\alpha + (f-u_i)\beta_2)
\end{equation}


where


   $\textbf{u}=[u_1, u_2, ..., u_g]$ and $1 \leq u_i \leq f, g \in \mathbb{N}, \sum_{i=1}^{g}u_i=m.$
   \end{thm}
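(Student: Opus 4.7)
The plan is to apply the max-flow min-cut theorem: since any RCSS with file size $B$ must satisfy $B \leq \mathrm{mincut}_G(S,T)$ for every IFG $G \in \mathbb{G}$ and data collector $T \in DC(G)$, it suffices to exhibit a particular $G$, $T$, and $(S,T)$-cut whose capacity equals the right-hand side of \eqref{eq:tradeoff}. Given an admissible $\mathbf{u} = (u_1,\ldots,u_g)$, I would build $G$ with $g$ consecutive repair stages. In stage $i$, a set $F_i$ of $f$ racks regenerates, and the $d$ helper racks are chosen so as to include all $\sum_{j<i} u_j$ racks that $T$ will later select from earlier stages; this is feasible since $\sum_{j<i} u_j < m \leq d \leq r-f$. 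The data collector $T$ then connects to every node of $u_i$ specified racks in $F_i$ for each stage $i$ (contributing $m(n/r)$ nodes), together with $k - m(n/r)$ additional non-relayer nodes in one further rack---a partition allowed by Lemma~\ref{lem:relayer}.

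To evaluate the cut, infinite-capacity edges force every $\mathsf{Out}'_{h,i'}$ of a selected rack into $V_T$, along with the corresponding $\mathsf{Out}_{h,i'}$ at the non-failed indices. For each selected rack $h$ at stage $i$, I independently decide whether the pair $(virt_h, mid_h)$ lies in $V_S$ or $V_T$ and take whichever is smaller: the $V_S$ placement costs $(n/r)\alpha$ (contributed by $S \to \mathsf{Out}_{h,i'}$ on surviving nodes together with $mid_h \to \mathsf{Out}'_{h,i'}$ on failed nodes), while the $V_T$ placement costs $(n/r - e/f)\alpha + (d - \sum_{j<i} u_j)\beta_1 + (f - u_i)\beta_2$---the second term because $\sum_{j<i} u_j$ helper racks already lie in $V_T$ by construction, and the third because among the $f-1$ peer racks of $h$ in $F_i$, only $f - u_i$ are non-selected and have $virt_{h'} \in V_S$. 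Taking the minimum term by term, summing $u_i$ copies per stage over all stages, and adding the $(k - m(n/r))\alpha$ from the extra non-relayer nodes, produces exactly $k\alpha + \sum_i u_i \min(0, (d-\sum_{j<i}u_j)\beta_1 - (e/f)\alpha + (f-u_i)\beta_2)$.

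The main obstacle I expect is the combinatorial design of the IFG so that at every stage the $d$ helper racks really can contain all previously selected racks while remaining disjoint from $F_i$: when $gf \leq r-f$ the $F_i$'s can be taken pairwise disjoint, but otherwise physical racks must be reused across stages, and the hypothesis $f \mid m$ is used to batch the $u_i$'s into clean blocks of size $f$ so that such a reuse is possible. A secondary bookkeeping task is verifying that every infinite-capacity intra-rack edge (in particular those between non-relayers and the relayer, and those between $\mathsf{Out}_{h,i'}$ and $\mathsf{Out}'_{h,i'}$ on surviving nodes) is consistent with the declared $V_S/V_T$ partition, so that no infinite contribution inadvertently enters the cut.
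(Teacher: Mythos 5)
Your proposal is correct and takes essentially the same route as the paper: invoke max-flow min-cut and exhibit, for each admissible $\mathbf{u}$, an information flow graph in which the data collector takes $u_i$ whole racks from the $i$-th repair group (with helper racks containing the previously selected racks) plus $k-m\frac{n}{r}$ extra nodes, and evaluate the cut as the per-group minimum of $\frac{n}{r}\alpha$ versus $(\frac{n}{r}-\frac{e}{f})\alpha+(d-\sum_{j<i}u_j)\beta_1+(f-u_i)\beta_2$, which is exactly the construction in the second half of the paper's proof. Your bookkeeping of the infinite-capacity intra-rack edges is, if anything, slightly more careful than the paper's own informal treatment, so no gap to report.
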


\begin{proof}
We consider $g$ number of stages (or repair groups of size $f$) and $u_i$ denotes number of racks contacted in stage (or from repair group) $i$ for the data collection process. Let $I$ denote the set of nodes which will contribute for the data collection process. 
We want to find the min-cut to get the upper bound on file size. To partition IFG $G$ into sets $U$ and $\Bar{U}$, we do not consider edges with infinite capacity. So, $S \in U$ and either $\mathsf{Out^{\text{'}}_{\text{h,i}}} \in \Bar{U}$ or $\mathsf{Out_{\text{h,i}}} \in \Bar{U}$ for $(h,i) \in I$. We know that IFGs are Directed acyclic graphs (DAG) and every DAG has a topological sorting. As we know that all the racks in a repair group have nodes which are reconstructed simultaneously so all the nodes in these racks will be adjacent to each other in topological sorting. If we consider topological sorting for the nodes connected to the data collector, then nodes in rack in $i$-th repair group do not have incoming edges from nodes in rack in $j$-th repair group for $j > i$ where $i,  j \in [g]$. 



    
From Lemma \ref{lem:relayer}, we can see that a relayer node contributes $min(\frac{n}{r}\alpha, d\beta_1+(\frac{n}{r}-\frac{e}{f})\alpha+(f-1)\beta_2)$ to the cut. Including a relayer node means including the whole rack for the data collection process. We take $u_i$ racks from the $i$-th repair group for data collection. For the reconstruction of a rack in $i$-th repair group, at most $\sum_{j=1}^{i-1} u_j$ edges come from the racks which are reconstructed in the previous steps and are already included for the data collection process.
Thus, $i$-th repair group contributes at least $u_imin(\frac{n}{r}\alpha, (d-\sum_{j=1}^{i-1}u_j)\beta_1 + (\frac{n}{r}-\frac{e}{f})\alpha + (f-u_i)\beta_2 ))$ to the cut. Finally, adding all the contributions from different repair groups, we get contributions for $m\frac{n}{r}$ nodes as $\sum_{i=1}^{g}u_i = m$ and each of the remaining $k - m\frac{n}{r}$ nodes have $\alpha$ contribution to the cut. This gives us the min-cut for any IFG $G(n,k,d,r,e,f,\alpha,\beta_1, \beta_2)$ and thus the upper bound on file size.

Next, we show that there exist an IFG where min-cut is equal to the right hand side value in \eqref{eq:tradeoff}. We consider the racks $1,2,..m$ participating in data collection process and hence, first $u_{1}$ racks taken for data collection in first stage (considering they are reconstructed in that stage) and next $u_{2}$ in next stage and so on, until last $u_{g}$ racks in $g$-th stage, considering that each of them have failed nodes, so each of them will contribute $u_{i}min(\alpha\frac{n}{r}, ((d-\sum_{j=1}^{i-1}u_{j})\beta_1 + (\frac{n}{r}-\frac{e}{f})\alpha + (f-u_i)\beta_2))$ and rest $k-m\frac{n}{r}$ nodes from the $(m+1)$-th rack will contribute $\alpha$ to the cut. Summing all the values will give min-cut which is right side in \eqref{eq:tradeoff}.
\end{proof}

\subsection{Optimal Tradeoffs} 
We would like to find points where storage cost $\alpha$ is minimized and repair bandwidth $\gamma$ is minimized under the constraints of \eqref{eq:tradeoff}.

$1)$ {\bf MSRCR Point:} Minimum Storage Rack-aware Cooperative Regeneration Codes are optimal codes which provide the lowest possible storage cost $\alpha$ while minimizing the repair bandwidth $\gamma$. 
\begin{equation*}
\alpha = \frac{B}{k}, \beta_1 = \frac{B}{k}\frac{e}{f}\frac{1}{d-m+f}, \beta_2 = \frac{B}{k}\frac{e}{f}\frac{1}{d-m+f}
\end{equation*}

These values are determined in two steps. In first step, we consider two particular cuts to find the minimum values of parameters $\alpha$, $\beta_1$ and $\beta_2$ which ensure that the max flow is at least equal to the file size $B$ which proves the optimality of the solution if correct. The correctness of these parameters is proved by showing that they are sufficient for all possible cuts.

$Proof$ $of$ $MSRCR$ $Optimality\colon$ First we minimize $\alpha$ and then for the minimum value of $\alpha$ we minimize repair bandwidth $\gamma$. It is clear from the \eqref{eq:tradeoff} that we get $\alpha = \frac{B}{k}$ as the minimum value of $\alpha$. Now we consider two particular cases of repairs $(\textbf{u}=[1,1,...,])$ and $(\textbf{u}=[f,f,...,])$ to minimize repair bandwidth.\\

$Case$ $1\colon$ When $u_i=f$, $\forall i \in \{1,2,...,g\}$, then we require that
$0 \leq (d-\sum_{j=1}^{i-1}f)\beta_1 - \frac{e}{f}\alpha, \forall i \in \{1,2,...,\frac{m}{f}\},$
which leads to $\beta_1  \geq \frac{B}{k}\frac{e}{f}\frac{1}{d-m+f}$.

$Case$ $2\colon$ When $u_i=1$, $\forall i \in \{1,2,...,g\}$, then we want
    $0 \leq (d-\sum_{j=1}^{i-1}1)\beta_1 - \frac{e}{f}\alpha + (f-1)\beta_2, \forall i \in \{1,2,...,m\},$
which results in
\begin{equation} \label{eq:beta2_min1}
    \beta_2 \geq \frac{1}{f-1}(\frac{e}{f}\alpha - (d-m+1)\beta_1).
\end{equation}

Substituting the minimum value of $\beta_2$ from above in the expression of cross rack repair bandwidth for one rack $\gamma = d\beta_1 + (f-1)\beta_2$, we have $\gamma$ in terms of $\beta_1$ as follows: \\
    $\gamma = \frac{e}{f}\frac{B}{k} + (m-1)\beta_1.$
 
This shows that the repair cost increases linearly with $\beta_1$, so to minimize $\gamma$ we need to minimize $\beta_1$. We know from Case 1 that the minimum value of 
    $\beta_1 = \frac{B}{k}\frac{e}{f}\frac{1}{d-m+f}.$ 

We can get the corresponding value of $\beta_2$ from \eqref{eq:beta2_min1}. We can observe that $\beta_1 = \beta_2$ for MSRCR codes.\\

%
%
%
%
%
%

$2)$ {\bf MBRCR Point:} Minimum Bandwidth Rack-aware Cooperative Regeneration Codes are optimal codes which provide the lowest possible repair bandwidth$(\gamma)$ while minimizing the storage cost $\alpha$. The $\alpha$, $\beta_1$ and $\beta_2$ parameters for MBRCR point will be derived to be :
\begin{equation*}
\alpha = \frac{f}{e}\gamma, \beta_1 = \frac{B}{k\frac{f}{e}(d+\frac{f-1}{2}) + \frac{m-m^2}{2}} 
\end{equation*}
\begin{equation*}
    \beta_2 = \frac{1}{2}\frac{B}{k\frac{f}{e}(d+\frac{f-1}{2}) + \frac{m-m^2}{2}}.
\end{equation*}
$Proof$ $of$ $MBRCR$ $Optimality \colon$ For MBRCR codes, we want to minimize $\gamma$ before $\alpha$. From the upper bound of file size, we can say that
    $0 \geq d\beta_1 + (\frac{n}{r}-\frac{e}{f})\alpha + (f-1)\beta_2,$
which implies that
$\alpha \geq \frac{f}{e}\gamma$.

Let us take two particular 
cases of repairs $(\textbf{u}=[1,1,...,])$ and $(\textbf{u}=[f,f,...,])$ to minimize repair bandwidth. \par
$Case$ $1 \colon$  When $u_i=f$, $\forall i \in {1,2,...,\frac{m}{f}}$, then
  $B \leq k\alpha + \sum_{i=1}^{\frac{m}{f}}f((d-\sum_{j=1}^{i-1}f)\beta_1 -\frac{e}{f}\alpha) ,$
which leads to   $ \beta_1 \geq \frac{B-(k-m\frac{e}{f})\alpha}{m(d-\frac{m-f}{2})}$.

$Case$ $2\colon$ When $u_i=1$, $\forall i \in {1,2,...,m}$, we have,

    $B \leq k\alpha+\sum_{i=1}^{m}((d-\sum_{j=1}^{i-1}1)\beta_1 - \frac{e}{f}\alpha + (f-1)\beta_2)$
which leads to

\begin{equation} \label{eq:beta2_min2}
    \beta_2 \geq \frac{1}{(f-1)} \frac{B-(k-m\frac{e}{f})\alpha-m\beta_1(d-\frac{m-1}{2})}{m}.
\end{equation}

Substituting the minimum value of $\beta_2$ from above in the expression of cross rack repair bandwidth for one rack $\gamma = d\beta_1 + (f-1)\beta_2$, we have $\gamma$ in terms of $\beta_1$ as follows:

    $\gamma = \frac{1}{1+\frac{f}{me}(k-m\frac{e}{f})} (\frac{B}{m} + \frac{m-1}{2}\beta_1).$ 

The above equation shows that $\gamma$ grows linearly with $\beta_1$. Hence, in order to minimize $\gamma$, we need to minimize $\beta_1$ and hence minimum value of $\beta_1$ is :
\begin{equation} \label{eq:beta1_min}
    \beta_1 = \frac{B-(k-m\frac{e}{f})\alpha}{m(d-\frac{m-f}{2})}.
\end{equation}
Substituting this value of $\beta_1$ in \eqref{eq:beta2_min2}, we get $\beta_1 = 2\beta_2$.
Substituting $\alpha = \gamma \frac{f}{e}$ and $\gamma = d\beta_1 + (f-1)\beta_2$ in \eqref{eq:beta1_min}, we get $\beta_1 = \frac{B}{k\frac{f}{e}(d+\frac{f-1}{2}) + \frac{m-m^2}{2}}.$\\

%
%
%
%
%

Correctness of MBRCR and MSRCR points can be proved by showing that these values make sure that enough information flows through every cut in any case. The proof of correctness for MBRCR and MSRCR follows along the lines of proof of correctness for MBCR and MSCR in  \cite{KermarrecScouarnec}. The proofs of correctness have been omitted because of lack of space.

\begin{note}
MSRCR and MBRCR points with $n=r$, $e=f$ coincide with MSCR and MBCR points given in \eqref{eq:mscr} and \eqref{eq:mbcr}. Similarly, MSRCR and MBRCR points with $e=f=1$ coincide with MSRR and MBRR points given in \eqref{eq:msrr} and \eqref{eq:mbrr}. \\
\end{note}

\section{Construction of MBRCR Codes for all parameters} \label{sec:mbrcr}

In this section, we will present our construction of MBRCR codes for all parameters ($B$, $n$, $k$, $d$, $r$, $e$, $f$). The idea of product-matrix construction has been introduced in \cite{rashmi2011optimal} in the context of regenerating codes. We generalize the construction of MBRR construction given in \cite{hou2019rack} to the case of multiple erasures using the product-matrix construction of MBCR code given in \cite{shum2016cooperative}.
%

\begin{constr}\label{constr:mbcr_local}
\normalfont
 The MBRCR construction will have the parameters $ \beta_1 = \frac{2e}{f}, \beta_2 = \frac{e}{f}, \alpha = 2d + f-1$ and
\begin{eqnarray}
  B &= & k(2d+f-1) + \frac{e}{f} (m-m^2), \\
 \hspace{-0.2in} & = & \left ( k - m \frac{e}{f} \right ) (2d + f-1) + \frac{e}{f} \left ( m (2d + f-m) \right ). \nonumber
\end{eqnarray}
where $m = \lfloor \frac{kr}{n} \rfloor$.


 We will describe the  MBRCR code construction in the following steps.

{\bf Step-1} Generation of global coded symbols: Let $[s_1, s_2, \ldots, s_B]$ denote $B$ message symbols over $\mathbb{F}_q$. Encode these message symbols using the generator matrix $G$ of a MDS code where $G$ is of size $B \times (n - r\frac{e}{f})(2d + f -1) + \frac{e}{f} ( m (2d + f-m))$.

{\bf Step-2} Filling racks partially with global coded symbols: Divide the $(n - r\frac{e}{f})(2d + f -1)$ of the global coded symbols into $r$ groups of $(\frac{n}{r} - \frac{e}{f})(2d + f -1)$ symbols each. These symbols are then used to fill the last $\frac{n}{r} - \frac{e}{f}$ nodes in each rack.

{\bf Step-3} Generating MBCR code symbols: Consider the remaining $\frac{e}{f} ( m (2d + f-m))$ of the global coded symbols which were not filled in the step above. We will generate MBCR coded symbols corresponding to $\frac{e}{f}$ MBCR codes as follows:

Consider a message matrix $M_i, 1 \leq i \leq \frac{e}{f}$ formed of $m(2d + f -m)$ global coded symbols as follows:
\begin{equation}\label{eq:message_matrix}
M_i \triangleq \left [ \begin{array}{c|c}
A_i&B_i \\
\hline 
C_i&0
\end{array} \right ]
\end{equation}
where $A_i$ is a matrix of size $m \times m$, $B_i$ is a matrix of size $m \times (d+f-m)$ and $C_i$ is a matrix of size $(d-m) \times m$. Let $U$ and $V$ denote matrices of sizes $d \times r$ and $(d + f) \times r$. $U = \begin{bmatrix} U^{(1)}_{m \times r} \\ U^{(2)}_{(d-m) \times r} \end{bmatrix}$ such that any $m \times m$ submatrix of $U^{(1)}$ is full rank and any $d \times d$ submatrix of $U$ is full rank. $V = \begin{bmatrix} V^{(1)}_{m \times r} \\ V^{(2)}_{(d-m) \times r} \end{bmatrix}$ such that any $m \times m$ submatrix of $V^{(1)}$ is full rank and any $(d+f) \times (d+f)$ submatrix of $V$ is full rank. $\bold{u}_l$ denote the $l^{\text{th}}$ column of $U$ and $\bold{v}_l$ denote the $l^{\text{th}}$ column of $V$. In the  $i^{\text{th}}$ node of $l^{\text{th}}$ rack ($1 \leq i \leq \frac{e}{f}, 1 \leq l \leq r$), $\alpha = 2d + f-1$ symbols will be calculated based on the following set of symbols: $M_i  \bold{v}_l$ as well as $M_i^T  \bold{u}_l$. There is a linear dependence relation among the above $2d+f$ code symbols given by $ \bold{u}_l^T  M_i  \bold{v}_l = \bold{v}_l^T M_i^T  \bold{u}_l$.

{\bf Step-4} Addition of local parities to the MBCR code symbols: Let $\bold{c}_l = [\bold{c}_{1,l}, \bold{c}_{2,l}, \ldots, \bold{c}_{\frac{n}{r}-\frac{e}{f},l}]$ denote the vector of global code symbols stored in the last $\frac{n}{r} - \frac{e}{f}$ nodes of rack $l$. Let $P_{i,l}$ denote a matrix of size $(2d + f) \times (\frac{n}{r} - \frac{e}{f})(2d+f-1)$, where the last row is filled with zeros. Consider the following set of $2d+f$ symbols obtained by adding local parities to the MBCR code symbols generated above:
\begin{equation*}
\left [ \begin{array}{c} M_i  \bold{v}_l \\ M_i^T  \bold{u}_l \end{array} \right ] + P_{i,l} \bold{c}_l  .
\end{equation*}
The first $2d+f-1$ symbols of the above vector are stored in the $i^{\text{th}}$ node in the $l^{\text{th}}$ rack. Also, $P_{i,l}$ are required to satisfy the property that $\bold{c}_{1,l}, \bold{c}_{2,l}, \ldots, \bold{c}_{\frac{n}{r}-\frac{e}{f},l}, P_{1,l} \bold{c}_l, \ldots, P_{\frac{e}{f},l} \bold{c}_l$ are such that any $\frac{n}{r}-\frac{e}{f}$ of them are sufficient to recover the rest (also known as vector-MDS property).

\end{constr}

{\bf File Recovery:} It is easy to see that if the $k$ nodes are picked from the last $ \frac{n}{r} - \frac{e}{f}$ nodes of each rack, then since the global coded symbols are formed based on MDS code, we can recover all the $B$ message symbols. If the $k$ nodes are picked partly from the first $\frac{e}{f}$ nodes in each rack and the rest from $ \frac{n}{r} - \frac{e}{f}$ nodes, it can be argued (using argument similar to those in the Proof of Theorem 8 in \cite{hou2019rack})  that by picking elements of $U$, $V$ and $\{P_{i,l}\}$ from a sufficiently large field, we can ensure that the encoding matrix corresponding to symbols in these $k$ nodes is full rank (rank $B$). Hence we can recover at least $B$ global coded symbols and as a result we can also recover $B$ message symbols by the property of the MDS code. The details are omitted due to lack of space.

{\bf Node Repair:}  Each of the $d$ helper racks sends $2 \frac{e}{f}$ symbols $\bold{u}_l^T  M_i  \bold{v}_j$ and $\bold{v}_l^T  M_i^T  \bold{u}_j$, $1 \leq i \leq \frac{e}{f}$. Based on these $M_i  \bold{v}_l$ can be recovered at each of the failed node. In the second round of repair, $\bold{u}_t^T M_i  \bold{v}_l$ is sent from rack $l$ to rack $t$, where $l$ and $t$ are both failed racks. Thus, the failed racks can completely recover the symbols corresponding to $\left [ \begin{array}{c} M_i  \bold{v}_l \\ M_i^T  \bold{u}_l \end{array} \right ]$. Hence, the MBCR code symbols can be generated at every failed node. Assume that $\ell_1$ of the first set of $\frac{e}{f}$ nodes and $\ell_2$ of the last set of nodes fail in a particular rack such that $\ell_1 + \ell_2 = \frac{e}{f}$, then it is clear that local parities corresponding to $\frac{e}{f} - \ell_1$ can be recovered first by subtracting the MBCR code symbols from them. In that case, a total of $\frac{e}{f} - \ell_1 + (\frac{n}{r} - \frac{e}{f} - \ell_2) = \frac{n}{r} - \frac{e}{f}$ node contents (global coded symbols + local parities) are available. These can be used for recover the rest of $\frac{e}{f}$ global coded symbols + local parities (from the vector-MDS property). Hence, the nodes can be repaired.

\bibliography{biblo}
\bibliographystyle{ieeetr}

\end{document}